\newtheorem{prethm}{{\bf Theorem}}
\newenvironment{thm}{\begin{prethm}{\hspace{-0.5
               em}{\bf.}}}{\end{prethm}}
\newtheorem{prepro}{{\bf Theorem}}%
\newtheorem{precor}{{\bf Corollary}}
\newenvironment{cor}{\begin{precor}{\hspace{-0.5
               em}{\bf.}}}{\end{precor}}
\newtheorem{preconj}{{\bf Conjecture}}
\newtheorem{preremark}{{\bf Remark}}
\newenvironment{remark}{\begin{preremark}\rm{\hspace{-0.5
               em}{\bf.}}}{\end{preremark}}
\newtheorem{prelem}{{\bf Lemma}}
\newenvironment{lem}{\begin{prelem}{\hspace{-0.5
               em}{\bf.}}}{\end{prelem}}
\newtheorem{preproof}{{\bf Proof.}}
\newenvironment{proof}[1]{\begin{preproof}{\rm
               #1}\hfill{$\Box$}}{\end{preproof}}
\def\emline#1#2#3#4#5#6{%
       \put(#1,#2){\special{em:moveto}}%
       \put(#4,#5){\special{em:lineto}}}
\def\newpic#1{}
\title{\large \bf On The Signed Edge Domination\\ Number of
Graphs
\thanks
{{\it Key Words}: Signed edge domination number,  $m$-connected,
complete bipartite graph.}
\thanks {2000{ \it Mathematics Subject Classification}: 05C69, 05C78.
}}
\author{\bf\small\sc S. Akbari, S. Bolouki, P. Hatami, M. Siami}
\date{}
\begin{document}
\maketitle
\begin{abstract}

Let $\gamma'_s(G)$   be the signed edge domination number of G. In
2006, Xu conjectured that: for any $2$-connected graph G of order
$ n (n \geq 2),$  $\gamma'_s(G)\geq 1$. In this article we show
that this conjecture is not true. More precisely, we show that
for any positive integer $m$, there exists an $m$-connected graph
$G$ such that $ \gamma'_s(G)\leq -\frac{m}{6}|V(G)|.$ Also for
every two natural numbers $m$ and $n$, we determine
$\gamma'_s(K_{m,n})$, where $K_{m,n}$ is the complete bipartite
graph with part sizes $m$ and $n$.

\end{abstract}
\vspace{1cm} \noindent{\Large\sc  Introduction} \vspace{4mm}

In this paper all of graphs that we consider are finite, simple and
undirected. Let $G=(V(G),E(G))$ be a graph with vertex set $V(G)$
and edge set $E(G)$. The {\it order} of $G$ denotes the number of
vertices of $G$. For any $v\in V(G)$, $d(v)$ is the degree of $v$
and $E(v)$ is the set of all edges incident with $v$. If $e=uv\in
E(G)$, then we put $N[e]=\{u'v'\in E(G)|u'=u$ or $v'=v\}$. Let $G$
be a graph and $f: E(G)\longrightarrow \{-1,1\}$ be a function. For
every vertex $v$, we define $s_v= \sum_{e\in E(v)} f(e)$. We denote
the complete bipartite graph with two parts of sizes $m$ and $n$, by
$K_{m,n}$.  Also we denote the cycle of order $n$, by $C_n$. In
\cite{xu3} the signed edge domination function of graphs was
introduced as follows:

Let $G=(V(G),E(G))$ be a non-empty graph. A function
$f:E(G)\longrightarrow \{-1,1\}$ is called a {\it signed edge
domination function} {\it {\rm (SEDF)}} of $G$ if $\sum_{e'\in
N[e]}f(e')\geq 1$, for every $e\in E(G)$. The {\it signed edge
domination number} of $G$ is defined as,
$$\gamma'_s(G)=min\{\sum_{e\in E(G)}f(e) \,| \,\,f {\rm \,\,is \,\,an\,}
\,{\rm SEDF} \,\,{\rm of}\,\, G\}.$$ Several papers have been
published on lower bounds and upper bounds of the signed edge
domination number of graphs, for instance, see \cite{xu1},
\cite{xu2}, \cite{xu3}, \cite{ze}, \cite{zh}. In \cite{xu1}, Xu
posed the following conjecture:

\noindent For any $2$-connected graph G of order $ n (n \geq 2),$
$\gamma'_s(G)\geq 1$.

In the first section we give some counterexamples to this
conjecture by showing that for any natural number $m$,  there
exists an $m$-connected graph $G$ such that $ \gamma'_s(G)\leq
-\frac{m}{6}|V(G)|.$ For any natural number $k$, let
$g(k)=min\{\gamma'_s(G)\,|\, |V(G)|=k\}$. In \cite{xu1} the
following problem was posed:

\noindent Determine the exact value of $g(k)$ for every positive
integer $k$. In Section 1, it is shown that for every natural
number $k$, $k\geq 12$, $g(k)\leq \frac{-(k-8)^2}{72}$.

\vspace{1cm} \noindent{\Large\sc 1. Counterexamples to a
Conjecture} \vspace{4mm}

In this section we present some counterexamples to a conjecture that
appeared in $\cite{xu1}$. We start this section by the following
simple lemma and leave the proof to the reader.

\begin{lem}\label{edge} Let $f:E(G)\longrightarrow \{-1,1\}$ be
a function.  Then $f$ is an {\rm SEDF} of $G$, if and only if  for
any edge $e=uv$, $s_u+s_v-f(e) \geq 1.$ Moreover, if $f$ is an
{\rm SEDF}, then $s_u+s_v\geq 0$.
\end{lem}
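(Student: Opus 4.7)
The plan is to unwind the definition of $N[e]$ and use inclusion--exclusion. Fix an edge $e=uv$ and consider the two sets $E(u)$ and $E(v)$ of edges incident to $u$ and to $v$, respectively. By the definition $N[e]=\{u'v'\in E(G)\mid u'=u \text{ or }v'=v\}$ we have $N[e]=E(u)\cup E(v)$, and because $G$ is simple the only edge simultaneously incident to $u$ and $v$ is $e$ itself, so $E(u)\cap E(v)=\{e\}$.

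Summing $f$ over $N[e]$ and applying inclusion--exclusion therefore gives
\[
\sum_{e'\in N[e]} f(e') \;=\; \sum_{e'\in E(u)} f(e') \;+\; \sum_{e'\in E(v)} f(e') \;-\; f(e) \;=\; s_u+s_v-f(e).
\]
Hence the defining inequality $\sum_{e'\in N[e]}f(e')\geq 1$ for an SEDF is literally equivalent to $s_u+s_v-f(e)\geq 1$, and since this equivalence holds edge by edge, $f$ is an SEDF if and only if $s_u+s_v-f(e)\geq 1$ for every edge $e=uv$. This yields the first statement.

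For the moreover part, assume $f$ is an SEDF and let $e=uv$ be any edge. Since $f$ takes values in $\{-1,1\}$ we have $f(e)\geq -1$, so the inequality just established gives
\[
s_u+s_v \;\geq\; 1+f(e) \;\geq\; 1+(-1) \;=\; 0,
\]
as required. No step here is delicate; the only point requiring a little care is recognizing that $E(u)$ and $E(v)$ share exactly the edge $e$ (and not more), which is where simplicity of $G$ is used.
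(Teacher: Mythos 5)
Your proof is correct and is exactly the intended argument: the paper leaves this lemma to the reader, and the inclusion--exclusion computation $\sum_{e'\in N[e]}f(e')=s_u+s_v-f(e)$ together with $f(e)\geq -1$ is the standard way to establish both claims.
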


An $L_{(m,n)}$-graph $G$ is a graph of order $(n+1)(mn+m+1)$,
whose vertices can be partitioned into  $n+1$ subsets $V_1,
\ldots, V_{n+1}$ such that:

\noindent (i) The induced subgraph on $V_1$ is the complete graph
$K_{mn+m+1}$.

\noindent (ii) The induced subgraph on $V_i$, $2\leq i\leq n+1$ is
the complement of $K_{mn+m+1}$.

\noindent (iii) For every $i$, $2\leq i\leq n+1$, all edges
between $V_1$ and $V_i$ form $m$ disjoint matchings of size
$mn+m+1$.

\noindent (iv) There is no edge between $V_i$ and $V_j$ for any
$i, j$, $2\leq i<j\leq n+1$.

  It is well-known
that for any natural number $r$, the edge chromatic number of
$K_{r,r}$ is $r$, see Theorem 6 of \cite[p.93]{bon}. Thus for
every pair of natural numbers $m$ and $n$, there is an
$L_{(m,n)}$-graph.

\begin{thm}
\label{thm:1} Let $m$ and $n$ be two natural numbers. Then for
every $L_{(m,n)}$-graph $G$, we have,
$$
\gamma'_s(G) \leq \frac{(mn+m+1)(m-mn)}{2}.$$
\end{thm}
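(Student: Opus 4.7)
The plan is to construct an explicit SEDF $f$ that attains the stated upper bound. Writing $N=mn+m+1$, I would assign $f(e)=+1$ to every edge of the complete graph induced on $V_1$ and $f(e)=-1$ to every matching edge joining $V_1$ to some $V_i$ with $i\geq 2$. The intuition is that each vertex of $V_1$ carries $N-1=m(n+1)$ positive clique edges, more than enough to absorb its $mn$ negative matching edges and still leave a small positive surplus, whereas each vertex of $V_i$ (for $i\geq 2$) only sees its $m$ matching edges.

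To certify that $f$ is an SEDF I would invoke Lemma \ref{edge}. First compute $s_v$: for $v\in V_1$ one has $s_v=(N-1)-mn=m$, and for $v\in V_i$ with $i\geq 2$ one has $s_v=-m$. Then verify $s_u+s_v-f(uv)\geq 1$ on the two possible edge types. A clique edge $uv\subseteq V_1$ gives $m+m-1=2m-1\geq 1$, using $m\geq 1$. A matching edge $uv$ with $u\in V_1$ and $v\in V_i$ gives $m+(-m)-(-1)=1$, which meets the required bound with equality.

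Finally, a direct count of the signed edges yields
\[
\sum_{e\in E(G)}f(e)=\binom{N}{2}-nmN=\frac{Nm(n+1)}{2}-nmN=\frac{N(m-mn)}{2}=\frac{(mn+m+1)(m-mn)}{2},
\]
so $\gamma'_s(G)$ is at most this value. There is no serious obstacle: the only decision of substance is the sign convention, and once the clique edges are chosen positive, the SEDF condition is automatically tight on matching edges and has slack $2(m-1)$ on clique edges, so no further adjustment is needed.
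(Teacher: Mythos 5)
Your construction is exactly the one in the paper (all clique edges in $V_1$ get $+1$, all matching edges get $-1$), and your verification via Lemma \ref{edge} and the edge count match the paper's proof. The argument is correct; no further comment is needed.
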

\begin{proof}{
To prove the inequality we provide an SEDF for $G$, say $f$,  such
that,
\begin{equation*}
\sum_{e\in E(G)} f(e)= \frac{(mn+m+1)(m-mn)}{2}.
\end{equation*}
Define $f(e)=1$, if both end points of $e$ are contained in
$V_1$, and $f(e)=-1$, otherwise. We find,

\begin{equation*}\begin{split}
\sum_{e\in E(G)} f(e) &= \frac{(mn+m+1)(mn+m)}{2}-(mn+m+1)mn\\&=
\frac {(mn+m+1)(m-mn)}{2}.
\end{split}\end{equation*}
It can be easily verified that for every $v\in V_1$, $s_v=m$, and
for every
 $v\in V(G)\setminus V_1$, $s_v=-m$. Now, Lemma \ref{edge} yields that $f$ is an SEDF for $G$.}
\end{proof}

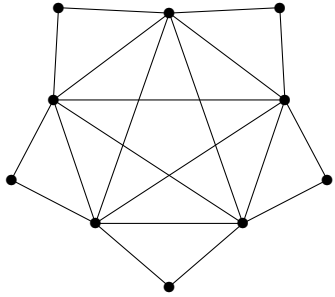
\begin{figure}[ht]\label{fig1}
\begin{center}
\unitlength=.70mm \special{em:linewidth 0.4pt}
\linethickness{0.4pt}

\begin{picture}(0,50)(0,-15)
\put(22,17.5){\circle*{2}}
\put(-22,17.5){\circle*{2}}
\put(0,34){\circle*{2}}
\put(14,-6){\circle*{2}}
\put(-14,-6){\circle*{2}}
%%%%%%%%%%%%%%%%%%%%%%%%%%%%%%%%%%%%%%%%%5
\put(21,35){\circle*{2}}
\put(-21,35){\circle*{2}}
\put(30,2.3){\circle*{2}}
\put(-30,2.3){\circle*{2}}
\put(0,-18){\circle*{2}}
%%%%%%%%%%%%%%%%%%%%%%
\emline{21}{35}{1}{22}{17.5}{2}
\emline{21}{35}{1}{0}{34}{2}
\emline{-21}{35}{1}{-22}{17.5}{2}
\emline{-21}{35}{1}{0}{34}{2}
\emline{30}{2.3}{1}{22}{17.5}{2}
\emline{30}{2.3}{1}{14}{-6}{2}
\emline{-30}{2.3}{1}{-14}{-6}{2}
\emline{-30}{2.3}{1}{-22}{17.5}{2}
\emline{0}{-18}{1}{14}{-6}{2}
\emline{0}{-18}{1}{-14}{-6}{2}
%%%%%%%%%%%%%%%
\emline{22}{17.5}{1}{-22}{17.5}{2}
\emline{22}{17.5}{1}{0}{34}{2}
\emline{22}{17.5}{1}{14}{-6}{2}
\emline{22}{17.5}{1}{-14}{-6}{2}
\emline{-22}{17.5}{1}{0}{34}{2}
\emline{-22}{17.5}{1}{14}{-6}{2}
\emline{-22}{17.5}{1}{-14}{-6}{2}
\emline{0}{34}{1}{14}{-6}{2}
\emline{0}{34}{1}{-14}{-6}{2}
\emline{-14}{-6}{1}{14}{-6}{2}

\end{picture}
\caption{A $2$-connected $L_{(2,1)}$-graph  with $\gamma'_s<1$.}
\end{center}
\end{figure}

\noindent {\bf Example 1}. Consider the $L_{(2,1)}$-graph $G$ shown
in Figure 1. The graph clearly has perfect matching; and by applying
Lemma~\ref{edge} to the edges of this matching we may conclude that
for every SEDF $f$ of this graph, $\sum_{e\in
E(G)}f(e)=\frac{1}{2}\sum_{v\in V(G)}s_v\geq 0$ , hence
$\gamma'_s(G)\geq 0$. But it follows from Theorem \ref{thm:1} that
$\gamma'_s(G)\leq 0$. Consequently, $\gamma'_s(G)=0$ and the bound
in Theorem \ref{thm:1} is sharp for this graph.

%%The graph shown in Figure 1 is an $L_{(2,1)}$-graph $G$, with
%%$\gamma'_s(G)=0$. It means that the bound in Theorem \ref{thm:1} is
%%sharp for this graph. Since the following graph has a perfect
%%matching, by applying Lemma~\ref{edge}, to the edges of this
%%matching we conclude that for every SEDF $f$ of this graph,
%%$\sum_{e\in E(G)}f(e)=\frac{1}{2}\sum_{v\in V(G)}s_v\geq 0$. Thus
%%$\gamma'_s(G)\geq 0$ and by Theorem~\ref{thm:1}, we have
%%$\gamma'_s(G)\leq 0$. Hence $\gamma'_s(G)=0$.

In \cite{xu1}, Xu conjectured that for any $2$-connected graph G
of order $ n (n \geq 2),$  $\gamma'_s(G)\geq 1$. The next theorem
shows that conjecture fails.

\begin{thm}\label{thm:2}
For any natural number $m$,  there exists an $m$-connected graph
$G$ such that \label{eq:1} $\gamma'_s(G)\leq -\frac{m}{6}|V(G)|.$
\end{thm}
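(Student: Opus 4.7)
The plan is to apply Theorem~\ref{thm:1} with the choice $n=2$. First, a direct computation shows that for any $n\geq 2$, after dividing by the positive quantity $(mn+m+1)$, the inequality
\[\frac{(mn+m+1)(m-mn)}{2} \leq -\frac{m}{6}(n+1)(mn+m+1)\]
becomes $3(m-mn)\leq -m(n+1)$, equivalently $2mn\geq 4m$. Hence $n=2$ is the smallest workable choice, and for it one has $|V(G)|=3(3m+1)=9m+3$ together with $\gamma'_s(G)\leq \frac{(3m+1)(-m)}{2}=-\frac{m(3m+1)}{2}=-\frac{m}{6}|V(G)|$ from Theorem~\ref{thm:1}, with equality. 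So any $L_{(m,2)}$-graph $G$ realizes the desired bound on $\gamma'_s(G)$.

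The remaining task is therefore to verify that an $L_{(m,2)}$-graph $G$ is $m$-connected. Every vertex of $V_2\cup V_3$ has degree exactly $m$, so the vertex connectivity cannot exceed $m$; only the reverse inequality requires argument. I would take an arbitrary subset $S\subseteq V(G)$ with $|S|\leq m-1$ and show directly that $G-S$ is connected. Since $|V_1|=3m+1>m-1$, the set $V_1\setminus S$ is non-empty and, being a subset of the clique $V_1$, induces a connected subgraph. Any vertex $v$ of $V_2\cup V_3$ surviving in $G-S$ has exactly $m$ neighbors in $V_1$, and fewer than $m$ of them can lie in $S$, so at least one is a neighbor of $v$ in $V_1\setminus S$. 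Hence every vertex of $G-S$ lies in, or is adjacent to, the connected set $V_1\setminus S$, proving $G-S$ is connected.

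The only real work is this connectivity check, and it is clean in a few lines as above. I do not anticipate any genuine obstacle: the argument hinges on the fact that the matching multiplicity $m$ between $V_1$ and each $V_i$ exactly matches the connectivity we are trying to establish, and that $V_1$ is a clique far too large to be destroyed by removing fewer than $m$ vertices. One might double-check the trivial case $m=1$, where the claim reduces to showing connectedness of the $L_{(1,2)}$-graph, but this is subsumed by the general argument.
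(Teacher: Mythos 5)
Your proposal is correct and follows essentially the same route as the paper: apply Theorem~\ref{thm:1} to an $L_{(m,2)}$-graph and check that such a graph is $m$-connected because $V_1$ is too large to be deleted by fewer than $m$ vertices while every vertex outside $V_1$ has exactly $m$ neighbours, all in $V_1$. Your write-up of the connectivity check is a bit more explicit than the paper's one-line version, but the argument is the same.
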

\begin{proof}{
First we claim that for each pair of natural numbers $m$ and $n$,
every $L_{(m,n)}$-graph is an $m$-connected graph. To see this we
note that if one omits at most $m-1$ vertices of an
$L_{(m,n)}$-graph, then some vertices of $V_1$ remain (because
$|V_1|=mn+m+1$) and since the degree of each vertex of $V_i$, $2\leq
i\leq n+1$ is $m$, the claim is proved.

Now, for any natural number $m$, consider an $L_{(m,2)}$-graph
$G$. By Theorem~\ref{thm:1}, the following inequality holds:$$
\gamma'_s(G) \leq \frac{1}{2}(2m+m+1)(m-2m)=
-\frac{m}{6}|V(G)|.$$}
\end{proof}

\begin{remark}
If we repeat the previous proof for an $L_{(m,n)}$-graph instead
of an $L_{(m,2)}$-graph, then we find $\gamma'_s(G)\leq
\frac{-m(n-1)}{2(n+1)}|V(G)|$. Hence for large enough $n$,
$\gamma'_s(G)\leq \frac{-m+1}{2}|V(G)|$.
\end{remark}

%\input{1k4-e.pic}

%%%%%%%%%%%%%%%%%%%%%%%%%%%%%%%%%%%%%%%%%%%%%%%%%%%

\begin{lem}\label{1/2} Let $G$ be a graph with an {\rm SEDF}. If $G$
contains $C_n$ as subgraph, then
$$\sum_{v\in V(C_n)} s_{v} \geq
0.$$
\end{lem}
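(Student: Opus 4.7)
The plan is to apply the ``moreover'' part of Lemma \ref{edge} to each of the $n$ edges of the cycle and sum the resulting inequalities.

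Label the vertices of the cycle as $v_1, v_2, \ldots, v_n$ with $v_{n+1} = v_1$, so that the edges of $C_n$ are $e_i = v_i v_{i+1}$ for $i = 1, \ldots, n$. Since $f$ is an SEDF of $G$, Lemma \ref{edge} tells us that for every edge $uv$ of $G$ we have $s_u + s_v \geq 0$. Applying this to each edge of $C_n$ gives $s_{v_i} + s_{v_{i+1}} \geq 0$ for every $i$.

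Summing these $n$ inequalities, each vertex $v_i$ of the cycle appears in exactly two of them (once as the first endpoint of $e_i$ and once as the second endpoint of $e_{i-1}$), so the sum telescopes to
\[
2 \sum_{v \in V(C_n)} s_v \;=\; \sum_{i=1}^n \bigl(s_{v_i} + s_{v_{i+1}}\bigr) \;\geq\; 0,
\]
which yields the desired inequality after dividing by $2$.

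There is no real obstacle here: the argument is a one-line consequence of Lemma \ref{edge}, the only subtlety being to note that each cycle vertex is counted exactly twice when summing over the edges of $C_n$. (If one preferred to avoid invoking the ``moreover'' statement, one could instead sum the full inequality $s_{v_i} + s_{v_{i+1}} - f(e_i) \geq 1$ over $i$ and use $\sum_{i=1}^n f(e_i) \geq -n$, but the shorter route above suffices.)
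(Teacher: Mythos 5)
Your proof is correct and is essentially identical to the paper's: the paper writes $\sum_{i=1}^n s_{v_i} = \frac{1}{2}\sum_{i=1}^n (s_{v_i}+s_{v_{i+1}})$ with indices modulo $n$ and then invokes the ``moreover'' part of Lemma~\ref{edge}, which is exactly your summation of $s_{v_i}+s_{v_{i+1}}\geq 0$ over the cycle edges. No differences worth noting.
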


\begin{proof}{ Let $V(C_n)=\{v_1, \ldots ,v_n\}$.
Clearly, we have,
$$\sum_{i=1}^n s_{v_i} =\frac{1}{2} \sum_{i=1}^n (s_{v_i}+s_{v_{i+1}}),$$
where indices are modulo $n$. Thus by Lemma~\ref{edge}, the proof
is complete.}
\end{proof}

\begin{thm}\label{thm:lowerbound}
For every graph $G$ of order $n$, $\gamma'_s(G) \geq
\frac{-n^2}{16}.$
\end{thm}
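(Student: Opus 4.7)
The plan is to estimate $\sum_{v} s_v = 2\gamma'_s(G)$ by splitting $V(G)$ according to the sign of $s_v$. I would let $V^- = \{v : s_v < 0\}$ and $V^+ = V(G)\setminus V^-$, and invoke the ``moreover'' half of Lemma~\ref{edge} to conclude that $V^-$ is independent: if $u, v \in V^-$ were adjacent, then $s_u + s_v < 0$, contradicting $s_u + s_v \geq 0$. Setting $a = |V^-|$, $b = n-a$, $\alpha_u = -s_u \geq 1$ for $u \in V^-$, and $P = \sum_{u\in V^-}\alpha_u$, every edge incident to $V^-$ lies in the cut $E(V^-, V^+)$, so $d(u) \leq b$ and $P \leq \sum_{u \in V^-}d(u) \leq ab$. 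Since $\sum_{V^-} s_v = -P$, the task reduces to producing a lower bound on $\sum_{V^+} s_v$.

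The main step is a Cauchy--Schwarz estimate on the cut. Summing the inequality $s_u + s_v \geq 0$ (from Lemma~\ref{edge}) over all edges $uv \in E(V^-, V^+)$ yields
\[
\sum_{v \in V^+} c_v\, s_v \;\geq\; \sum_{u \in V^-} d(u)\, \alpha_u,
\]
where $c_v = |N(v)\cap V^-|$. Using $d(u) \geq \alpha_u$ and the standard inequality $\sum_u \alpha_u^2 \geq (\sum_u\alpha_u)^2/a$, the right-hand side is at least $P^2/a$. Since $c_v \leq a$ and $s_v \geq 0$ on $V^+$, the left-hand side is at most $a\sum_{V^+} s_v$, so $\sum_{V^+} s_v \geq P^2/a^2$. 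Combining the two pieces gives $\sum_v s_v \geq P^2/a^2 - P$.

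Finally I would minimize the quadratic $\varphi(P) = P^2/a^2 - P$ over $P \in [0, ab]$. Its unconstrained critical point $P = a^2/2$ lies in this interval exactly when $a \leq 2b$, in which case $\varphi$ attains $-a^2/4$; otherwise ($a > 2b$) the minimum is $\varphi(ab) = -b(a-b)$. A routine optimization subject to $a+b=n$ shows $a^2/4 \leq n^2/9$ in the first regime (maximized at $a = 2n/3$) and $b(a-b) \leq n^2/8$ in the second (maximized at $a = 3n/4$, $b = n/4$), so in either case $\sum_v s_v \geq -n^2/8$ and hence $\gamma'_s(G) \geq -n^2/16$. The main obstacle is the bound $\sum_{V^+}s_v \geq P^2/a^2$: the trivial $\sum_{V^+} s_v \geq 0$ would yield only $\gamma'_s(G) \geq -n^2/8$, so the Cauchy--Schwarz argument on the bipartite cut is essential for gaining the final factor of two; the ensuing quadratic optimization is elementary but must be done on two separate ranges of $a$.
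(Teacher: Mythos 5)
Your proof is correct, but it takes a genuinely different route from the paper's. The paper extracts a maximal elementary subgraph $H$ (a disjoint union of edges and odd cycles), uses Lemma~\ref{edge} on the matching edges and Lemma~\ref{1/2} on the cycles to get $\sum_{v\in V(H)}s_v\geq 0$, and then shows by an augmenting argument that every vertex not covered by $H$ has degree at most $\frac{n-\alpha}{2}$, which gives $\gamma'_s(G)\geq -\frac{\alpha(n-\alpha)}{4}\geq -\frac{n^2}{16}$. You instead partition $V(G)$ by the sign of $s_v$, use only the ``moreover'' clause of Lemma~\ref{edge} to see that $V^-$ is independent, and then sum $s_u+s_v\geq 0$ over the cut $E(V^-,V^+)$; the Cauchy--Schwarz step $\sum_u d(u)\alpha_u\geq \sum_u\alpha_u^2\geq P^2/a$ together with $c_v\leq a$ gives $\sum_{V^+}s_v\geq P^2/a^2$, and the two-regime optimization of $P^2/a^2-P$ over $0\leq P\leq ab$, $a+b=n$, lands on $\sum_v s_v\geq -n^2/8$, i.e.\ $\gamma'_s(G)\geq -n^2/16$. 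I checked the steps: the independence of $V^-$, the bounds $\alpha_u\leq d(u)\leq b$, the cut-summation identity, and both branches of the optimization (minimum $-a^2/4$ when $a\leq 2b$, worth at worst $-n^2/9$; minimum $-b(a-b)$ when $a>2b$, worth at worst $-n^2/8$ at $b=n/4$) are all sound, and the degenerate case $a=0$ is trivial. Your argument avoids the elementary-subgraph machinery and Lemma~\ref{1/2} entirely, at the cost of a slightly heavier computation; the paper's approach has the side benefit of immediately yielding its Corollary~1 (graphs with a spanning elementary subgraph satisfy $\gamma'_s(G)\geq 0$), which your sign-partition argument does not give directly.
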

\begin{proof}{
An elementary graph is a graph in which each component is a
$1$-regular graph or a $2$-regular graph. Let $H$ be an elementary
subgraph of $G$ with maximum number of vertices. With no loss of
generality we may assume that $H$ has no even cycle, since one can
replace an even cycle of size $2k$ by $k$ vertex-disjoint edges.
Suppose $\alpha$ is the number of vertices of $G$ which are not
covered by $H$. \noindent We claim that for every vertex $v$ which
is not covered by $H$, $d(v)\leq \frac{n-\alpha}{2}.$

To see this, we note that $v$ is  adjacent to none of the other
$\alpha-1$ vertices which are not covered by $H$, because
otherwise we could find an elementary  subgraph $H'$ which covers
more vertices of $G$, a contradiction. Also, $v$ is adjacent to
none of the vertices of an odd cycle of $H$, because  if $v$ is
adjacent to a vertex $u$ of an odd cycle $C$, we can decompose
the set $E(C)\bigcup \{uv\}$ into vertex-disjoint edges which
cover $V(C) \bigcup \{v\}$,  obtaining  an elementary subgraph
$H'$ which covers more vertices, a contradiction. If $v$ is
adjacent to both end points of an edge in the matching part of
$H$, then we can add an odd cycle of length $3$ to $H$, obtaining
a bigger elementary subgraph, a contradiction. Thus the degree of
$v$ does not exceed the number of the edges in the matching part
of $H$, so, $d(v)\leq \frac{n-\alpha}{2}.$

By Lemmas \ref{edge} and \ref{1/2}, $\sum_{v\in V(H)}s_v\geq 0$.
Therefore we have,

\begin{equation*}\begin{split}
\sum_{e\in E(G)} f(e) &= \frac{1}{2} (\sum_{v\in V(H)}s_v+ \sum_{v
\in V(G)\setminus V(H)}s_v) \geq \frac{1}{2} \sum_{v \in
V(G)\setminus V(H)}s_v\\ &\geq \frac{-1}{2}\sum_{v\in
V(G)\backslash V(H)}d(v)\geq \frac{-\alpha(n-\alpha)}{4} \geq
\frac{-n^2}{16}.
\end{split}\end{equation*}
}\end{proof}
\begin{cor} If $G$ has a spanning elementary subgraph, then $\gamma'_s(G)\geq
0$.
\end{cor}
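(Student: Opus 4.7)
The plan is to specialize the argument already used in the proof of Theorem \ref{thm:lowerbound} to the case where the chosen elementary subgraph is spanning. Since $G$ has a spanning elementary subgraph $H$, I have $V(H)=V(G)$, so the ``uncovered'' parameter $\alpha$ vanishes, and every vertex of $G$ lies in a matching edge or an odd cycle of $H$ (after the standard reduction replacing any even cycle of length $2k$ by $k$ vertex-disjoint edges, which keeps the subgraph spanning and elementary).

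Next I would fix an arbitrary SEDF $f$ of $G$ and show that $\sum_{e\in E(G)}f(e)\ge 0$, from which $\gamma'_s(G)\ge 0$ follows. Summing the vertex sums $s_v$ over $V(H)=V(G)$ and decomposing according to the components of $H$, each matching edge $uv$ contributes $s_u+s_v\ge 0$ by Lemma \ref{edge}, and each odd cycle contributes a nonnegative sum by Lemma \ref{1/2}. Therefore $\sum_{v\in V(G)}s_v\ge 0$.

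Finally, since each edge of $G$ is incident to exactly two vertices,
\begin{equation*}
\sum_{e\in E(G)}f(e)=\frac{1}{2}\sum_{v\in V(G)}s_v\ge 0,
\end{equation*}
and taking the minimum over all SEDFs gives $\gamma'_s(G)\ge 0$.

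There is no real obstacle here; the corollary is essentially the ``$\alpha=0$'' case of the proof of Theorem \ref{thm:lowerbound}, where the error term $-\alpha(n-\alpha)/4$ arising from vertices not covered by $H$ disappears. The only point requiring a brief remark is the legitimacy of assuming $H$ has no even cycles, but this follows exactly as in the previous proof by replacing each even cycle with a perfect matching on its vertex set.
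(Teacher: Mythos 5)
Your proof is correct and follows exactly the paper's approach: the paper's own proof is literally ``in the proof of the previous theorem replace $\alpha$ by $0$,'' and you have simply written out the details of that specialization (the decomposition into matching edges and odd cycles, the appeal to Lemmas \ref{edge} and \ref{1/2}, and the handshake identity $\sum_{e}f(e)=\frac{1}{2}\sum_{v}s_v$).
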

\begin{proof}{
In the proof of the previous theorem replace $\alpha$ by $0$.}
\end{proof}

\noindent  In \cite{xu1} the following problem has been posed:

Determine the exact value of $g(k)$ for every positive integer
$k$. In the next theorem we find a lower and an upper bound for
$g(k)$, $k\geq 12$.

\begin{thm}
For every natural number $k$, $k\geq 12$, $-\frac{k^2}{16}\leq
g(k)\leq -\frac{(k-8)^2}{72}$.
\end{thm}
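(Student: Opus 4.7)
The plan splits into two essentially independent pieces. The lower bound $g(k)\ge -k^2/16$ comes for free: applying Theorem \ref{thm:lowerbound} to any graph attaining the minimum in the definition of $g(k)$ yields the claim. So all the content lies in the upper bound.

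For the upper bound $g(k)\le -(k-8)^2/72$, the idea is to exhibit, for each $k\ge 12$, an explicit graph on $k$ vertices whose signed edge domination number is at most $-(k-8)^2/72$. My candidate is an $L_{(m,2)}$-graph padded with isolated vertices. Set
\[
m=\Bigl\lfloor\frac{k-3}{9}\Bigr\rfloor,
\]
so that $m\ge 1$ and $9m+3\le k\le 9m+11$. The paragraph following the definition of $L_{(m,n)}$-graphs guarantees the existence of an $L_{(m,2)}$-graph $G$, and its order is $3(3m+1)=9m+3$. I form $G'$ by adjoining $k-(9m+3)$ isolated vertices, giving a graph of order exactly $k$. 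Isolated vertices contribute no edges, so the SEDFs of $G'$ are precisely the SEDFs of $G$ with identical edge sums, whence $\gamma'_s(G')=\gamma'_s(G)$. Theorem \ref{thm:1} then yields
\[
\gamma'_s(G')\le \frac{(3m+1)(m-2m)}{2}=-\frac{m(3m+1)}{2}.
\]

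What remains is a brief calculation to check $-m(3m+1)/2\le -(k-8)^2/72$, equivalently $36m(3m+1)\ge (k-8)^2$. The bound $k\le 9m+11$ gives $k-8\le 9m+3=3(3m+1)$, so $(k-8)^2\le 9(3m+1)^2$, reducing the inequality to $4m(3m+1)\ge (3m+1)^2$, i.e.\ $m\ge 1$, which holds throughout the range $k\ge 12$. The only delicate issue is that $k$ need not be of the form $9m+3$; this is exactly what the isolated vertices absorb, and the constant $72$ in the denominator is tuned to accommodate the worst case $k=9m+11$. I do not expect any substantive obstacle beyond this bookkeeping.
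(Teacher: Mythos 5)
Your proposal is correct and follows essentially the same route as the paper: the lower bound is read off from Theorem \ref{thm:lowerbound}, and the upper bound comes from an $L_{(m,2)}$-graph of order $9m+3$ padded with isolated vertices, with the arithmetic reducing to $m\geq 1$. The only difference is cosmetic bookkeeping (you fix $m=\lfloor (k-3)/9\rfloor$ up front, while the paper writes $k=9m+3+r$ and passes through the intermediate bound $-(k-r)^2/72$).
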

\begin{proof}{ The lower bound is an immediate consequence of
Theorem \ref{thm:lowerbound}. First we obtain the upper bound for
$k=9m+3$. In the proof of the Theorem~\ref{thm:1}, we constructed
a graph $G$ of order $(n+1)(mn+m+1)$ vertices for which,
$$\gamma'_s(G)\leq \frac
{(mn+m+1)(m-mn)}{2}.$$ Assume that $n=2$. We have,
$$g(9m+3)\leq \frac{-m}{6}(9m+3).$$
Since $k\geq 12$, for $k=9m+3$ we find,
\begin{equation*}
g(k)\leq \frac{-\left(\frac{k-3}{9}\right)}{6}k\leq
\frac{-k^2}{72}.
\end{equation*}

Now, for every $k$, we may write $k=9m+3+r$, where $0\leq r<9$. By
adding $r$ isolated vertices to the constructed graph for $9m+3$,
and using the previous inequality for $g(9m+3)$, we have the
following:
$$ g(k)\leq \frac{-(k-r)^2}{72}\leq
\frac{-(k-8)^2}{72},$$

and the proof is complete.}
\end{proof}

\vspace{1cm} \noindent{\Large\sc 2. Signed Edge Domination of\\
Complete Bipartite Graphs} \vspace{4mm}

\noindent In this section we want to obtain the signed edge
domination number of complete bipartite graphs.

\begin{thm}
Let $m$ and $n$ be two natural numbers where $m\leq n$. Then the
following hold:

\noindent (i) If $m$ and $n$  are even, then
$\gamma'_s(K_{m,n})=\min(2m,n)$,

\noindent (ii) If $m$ and $n$  are odd, then
$\gamma'_s(K_{m,n})=\min(2m-1, n)$,

\noindent (iii) If $m$ is even and $n$ is odd, then
$\gamma'_s(K_{m,n})=\min(3m, \max(2m, n+1))$,

\noindent (iv) If $m$ is odd and $n$ is even, then
$\gamma'_s(K_{m,n})=\min(3m-1, \max(2m, n))$.
\end{thm}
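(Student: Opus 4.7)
The plan is to prove matching upper and lower bounds on $\gamma'_s(K_{m,n})$ corresponding to the four formulae. Label the two parts $A=\{a_1,\ldots,a_m\}$ and $B=\{b_1,\ldots,b_n\}$. For any $f:E(K_{m,n})\to\{-1,1\}$, let $p_v$ be the number of $+1$-edges incident to $v$, so $s_a=2p_a-n$ for $a\in A$ and $s_b=2p_b-m$ for $b\in B$; in particular $s_a\equiv n\pmod{2}$, $s_b\equiv m\pmod{2}$, and $\sum_{e}f(e)=\sum_a s_a=\sum_b s_b$. Lemma~\ref{edge} reads $s_u+s_v\geq 1+f(uv)$, so $s_u+s_v\geq 0$ on every edge and $s_u+s_v\geq 2$ on every positive edge.

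For the upper bound I will construct an explicit SEDF in each case by first choosing target values of $s_v$ at each vertex, using the smallest parity-compatible nonnegative (or $-1$) values so that $\sum_a s_a=\sum_b s_b$ equals the claimed formula, and then realizing the underlying bipartite graph of $+1$-edges via the Gale--Ryser theorem. For instance, in case (i) with $n\geq 2m$ set $s_a=2$ for every $a\in A$ and assign $s_b=2$ to $m$ vertices of $B$ and $s_b=0$ to the remaining $n-m$; in case (iv) with bound $3m-1$, take one $b_0$ with $s_{b_0}=-1$, set $s_b\in\{1,3\}$ on the other $n-1$ vertices of $B$, and set $s_a=4$ on $(m-1)/2$ vertices of $A$ and $s_a=2$ on the rest. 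Analogous designs cover the remaining subcases. Verifying the SEDF inequality reduces to the immediate check $s_u+s_v\geq 1+f(uv)$ from the chosen $s_v$-values.

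For the lower bound, fix any SEDF $f$ and split on the sign of the $s_v$. If some $s_{a_0}<0$, then $s_b\geq -s_{a_0}$ for every $b\in B$; rounding up to the parity of $m$ forces $\sum f=\sum_b s_b$ comfortably above the target. If some $s_{b_0}<0$, one has $s_a\geq -s_{b_0}$ for every $a$, together with the sharper $s_a\geq 2-s_{b_0}$ for the $p_{b_0}$ vertices $a$ with $f(ab_0)=1$; summing with the correct parity rounding yields exactly $2m-1$ and $3m-1$ in cases (ii) and (iv), and strictly larger values in cases (i) and (iii). If all $s_v\geq 0$, let $c_A,c_B$ be the smallest parity-compatible nonneg values on each side and set $A'=\{a:s_a=c_A\}$, $B'=\{b:s_b=c_B\}$. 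When $c_A+c_B<2$, every $A'$-$B'$ edge must be negative; counting positive-edge degrees at a vertex of $A'$ (resp.\ $B'$) then forces $|B'|$ and $|A'|$ to be at most roughly $n/2$ and $m/2$ (with parity-dependent corrections), and combining $\sum_a s_a\geq c_A|A'|+(c_A+2)(m-|A'|)$ with the symmetric estimate on $B$ and the trivial bound $\sum_b s_b\geq c_B n$ yields the claimed target in each of the four cases.

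The main obstacle is the branching: four parity regimes, each with several $n$-ranges producing different closed forms, and three subcases in the lower-bound split, give roughly a dozen verifications. Care is needed to check Gale--Ryser realizability of the prescribed degree sequences for every $n$ in each range, and degenerate cases such as $m=1$ or $n=m$ merit separate inspection. Once the parity accounting is in place, each individual verification reduces to a short computation combining the identity $s_v=2p_v-d(v)$ with the bipartite completeness of $K_{m,n}$.
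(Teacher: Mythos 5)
Your lower-bound half is essentially the paper's own argument (locate a vertex of minimal $s$-value, apply Lemma~\ref{edge} across its incident edges with the parity refinement $s_a\equiv n$, $s_b\equiv m \pmod 2$, and sum over the opposite side), and in outline it closes correctly in each branch; the only caution is that in your ``all $s_v\geq 0$'' subcase the bound $\sum_b s_b\geq c_Bn$ is vacuous when $c_B=0$ (cases (i), (iii), (iv)), so there you must run the cross-side estimate from a vertex $b_0\in B'$ onto $A$ exactly as in the negative subcase --- this is what the paper does, and your sketch gestures at it but does not pin it down.

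The genuine gap is in the upper bound. An SEDF is \emph{not} determined by the multiset of values $\{s_v\}$: the defining inequality is $s_u+s_v\geq 1+f(uv)$, so a pair with $s_u+s_v\in\{0,1\}$ is admissible only if the edge $uv$ carries the value $-1$, and a pair with $s_u+s_v<0$ is inadmissible outright. Hence your claim that ``verifying the SEDF inequality reduces to the immediate check from the chosen $s_v$-values'' is false whenever the prescribed values contain a low-sum pair, and your designs do contain such pairs: in your case (iv) example the vertex $b_0$ with $s_{b_0}=-1$ paired with any $a$ having $s_a=2$ gives $s_a+s_{b_0}=1$, forcing \emph{all} $(m-1)/2$ positive edges at $b_0$ to land precisely on the $(m-1)/2$ vertices with $s_a=4$; similarly in case (i) with $n<2m$ some $s_a=0$ must meet some $s_b=0$, forcing all positive edges at such $a$ onto the $s_b=2$ vertices. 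The Gale--Ryser theorem only guarantees \emph{some} bipartite realization of the degree sequence $(p_v)$, with no control over which pairs are joined, so it does not deliver a realization respecting these forced non-adjacencies. The constructions are still realizable, but you must exhibit them explicitly (as the paper does with its $i+j$-parity and modular formulas) or prove a constrained realization lemma; as written, the upper-bound argument is incomplete in every case where the target forces a pair with $s_u+s_v\leq 1$.
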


\begin{proof}{
Let $(X, Y)$ be two parts of the complete bipartite graph
$K_{m,n}$ and $X=\{u_1, \ldots ,u_m\}$ and $Y=\{v_1, \ldots ,
v_n\}$. We note that if $f$ is an SEDF for $K_{m,n}$, then we
have,
$$\sum_{e\in E(K_{m,n})}f(e)=\sum_{u\in X}s_u=\sum_{v\in Y}s_v.$$

 {\bf (i)} First we show that $\gamma'_s(K_{m,n})\geq
\min(2m,n)$. It suffices to show that if $f$ is an SEDF such that
$\sum_{e\in E(K_{m,n})}f(e)<2m$, then $\sum_{e\in
E(K_{m,n})}f(e)\geq n$. Since $\sum_{e\in E(K_{m,n})}f(e)<2m$,
there exists a vertex $u\in X$ such that $s_u<2$.  But $s_u$ is
even and so $s_u\leq 0$. If $s_u=0$, then $u$ is incident with
$n/2$ edges with value $1$ and $n/2$ edges with value  $-1$. If
$f(uv)=1$, for some  $v\in Y$, then by Lemma \ref{edge}, $s_v\geq
2$.
 If $f(uv)=-1$, for some $v\in Y$, then we find $s_v\geq 0$. Thus we have
$\sum_{e\in E(K_{m,n})}f(e)=\sum_{v\in Y}s_v\geq 2\left
(\frac{n}{2}\right)=n$. If $s_u<0$, then $s_u\leq -2$.  Now, for
each $v\in Y$, by Lemma 1, $s_v\geq 2$. Therefore we have the
following:
$$\sum_{e\in E(K_{m,n})}f(e)=\sum_{v\in Y}s_v\geq 2n>n.$$
Hence  $\gamma'_s(K_{m,n})\geq \min(2m,n)$.

We now show that there exist two SEDF, say $f$ and $g$, such that
$\sum_{e\in E(K_{m,n})}f(e)=2m$ and $\sum_{e\in E(K_{m,n})}g(e)=n$.
Let $f$ be define as follows:

\ \ $f(u_iv_j)=
\begin{cases}
1 &\ {\rm if}\,\, i+j\,\, {\rm is \,\, odd}\\
1 &\ {\rm if}\,\, i=j \\
-1  &\ {\rm otherwise.}
 \end{cases}$

It is clear that for every $u_i$, $s_{u_i}=2$. Also one can see that
$s_{v_i}\geq 0$, for $i=1,\ldots ,n.$ Now, by Lemma~\ref{edge}, we
see that $f$ is an SEDF. Therefore,
$$\gamma'_s(K_{m,n})\leq\sum_{e\in E(K_{m,n})}f(e)=\sum_{u\in
X}s_u=2m,$$
as required.

Define  $g$ as follows:

\ \ $g(u_iv_j)=
\begin{cases}
1  &\  {\rm if}\,\, i+j\,\, {\rm is\,\,  odd}\\
1   &\  {\rm if}\,\, i\,\, {\rm is\,\, even\,\, and}\,\, i=j\,\,
{\rm
modulo}\,\, m\\
-1 &\ {\rm otherwise}.
 \end{cases}$

 We note that if $i$ is even, then
$s_{v_i}=2$; and if $i$ is odd, then $s_{v_i}=0$. Also, if $i$ is
even, then $s_{u_i}\geq 2$; and if $i$ is odd, then $s_{u_i}=0.$
Now, Lemma~\ref{edge} implies that $g$ is an SEDF. Therefore,
$$\gamma'_s(K_{m,n})\leq\sum_{e\in E(K_{m,n})}g(e)=\sum_{i=1}^ns_{v_i}=\frac{2n}{2}=n,$$
as required.

 {\bf (ii)} First we show that $\gamma'_s(K_{m,n})\geq
\min(2m-1, n)$. It is enough to show that if $f$ is an SEDF with
$\sum_{e\in E(K_{m,n})}f(e)<n$, then $\sum_{e\in E(K_{m,n})}f(e)\geq
2m-1$. Since $\sum_{e\in E(K_{m,n})}f(e)<n$, there exists a vertex
$v\in Y$ such that $s_v<1$. But $s_v$ is odd and so $s_v\leq -1$. If
$s_v=-1$, then $v$ is incident with $\frac{m-1}{2}$ edges with value
$1$ and $\frac{m+1}{2}$ edges with value $-1$. If $f(uv)=1$, for
some $u\in X$, then by Lemma 1, $s_u\geq 3$. If $f(uv)=-1$, for some
$u\in X$, then similarly we have $s_u\geq 1$. Thus we have the
following:
$$\sum_{e\in E(K_{m,n})}f(e)=\sum_{u\in X}s_u\geq
3\left(\frac{m-1}{2}\right)+\frac{m+1}{2}=2m-1.$$
If $s_v<-1$, then $s_v\leq -3$. Now, by Lemma 1, $s_u\geq 3$ for
each $u\in X$. Therefore we find that, $$\sum_{e\in
E(K_{m,n})}f(e)=\sum_{u\in X}s_u\geq 3m>2m-1.$$

Hence $\gamma'_s(K_{m,n})\geq \min(2m-1, n)$. We now show that there
are two SEDF $f$ and $g$ such that $\sum_{e\in E(K_{m,n})}f(e)=2m-1$
and $\sum_{e\in E(K_{m,n})}g(e)=n$.

Define  $f$ and $g$   as follows,

\ \ $f(u_iv_j) =
\begin{cases}
    1        &\ {\rm if}\,\, i+j\, \,{\rm \, is \, odd} \\
    1         &\ {\rm if}\,\,i=j\\
    -1        &\mbox{otherwise}.
\end{cases}$

It is straightforward to verify that $s_{u_i}=3$, if $i$ is even;
and $s_{u_i}=1$, if $i$ is odd. Also, we have,

\ \ $s_{v_j}=
\begin{cases}
3  &\ {\rm if }\,\, j\,\, {\rm is\,\, even \,\, and}\,\, j \leq m \\
1  &\ {\rm if}\,\, j\,\, {\rm is\,\, odd \,\, and} \,\, j\leq m \\
1  &\ {\rm if}\,\, j \,\,{\rm is\,\, even \,\, and}\,\, j>m \\
-1  &\ {\rm if}\,\, j\,\,{\rm is\,\, odd \,\, and} \,\, j>m.
\end{cases}$

 Consequently, $f$ is an SEDF, by lemma 1. Therefore,
 %Now, Lemma 1 implies that $f$ is an SEDF. Also, we find,
%
$$\gamma'_s(K_{m,n})\leq\sum_{e\in E(K_{m,n})}f(e)=\sum_{u\in X}s_u=3\left(\frac{m-1}{2}\right)+\frac{m+1}{2}=2m-1,$$
as required.

Define $g$ as follows:

 \ \ $g(u_iv_j)=
\begin{cases}
1  &\ {\rm if}\, \, i+j \,\, {\rm is\,\, odd}\\
1        &\ {\rm if}\, \, j\,\,{\rm  is \,\,odd} {\rm \,\, and\,\,}i=j\,\, {\rm modulo \,\,} (m+1) \\
-1 &\ {\rm otherwise}.
\end{cases}$

 It is not hard to see that for any $u\in X$,
$s_u\geq 1$ and for any $v\in Y, s_v=1.$ Therefore $g$ is an SEDF
and $\gamma'_s(K_{m,n})\leq\sum_{e\in E(K_{m,n})}g(e)=\sum_{v \in
Y}s_v=n$.

{\bf (iii)} Three cases may be considered:

\noindent {\bf Case 1.} $n+1\leq 2m$. We claim that
$\gamma'_s(K_{m,n})=2m$. First we show that
$\gamma'_s(K_{m,n})\geq 2m$. By contradiction suppose that there
exists an SEDF, say $f$, such that $\sum_{e\in E(K_{m,n})}f(e)<
2m.$ Since $m\leq n$, we find that $\sum_{e\in E(K_{m,n})}f(e)<
2n$. Thus there exists a vertex $v\in Y$ such that $s_v<2$. On
the other hand since $s_v$ is even, $s_v\leq 0$. If $s_v=0$, then
$v$ is incident with $m/2$ edges with value $1$ and $m/2$ edges
with value $-1$. If $f(uv)=1$, for some $u\in X$, then by Lemma
1, we have, $s_u\geq 2$. Since $s_u$ is odd we find $s_u\geq 3$.
If $f(uv)=-1$, for some $u\in X$, then by a similar argument one
can see that $s_u\geq 1$. Thus,
$$\sum_{e\in E(K_{m,n})}f(e)=\sum_{u\in X} s_u\geq 3m/2+m/2=2m,$$ a
contradiction.  Hence $\gamma'_s(K_{m,n})\geq 2m$.

If $s_v<0$, then $s_v\leq -2$. By Lemma~\ref{edge}, for every
$u\in X$, $s_u\geq 2$. Hence we obtain that, $$\sum_{e\in
E(K_{m,n})}f(e)=\sum_{u\in X}s_u\geq 2m,$$ a contradiction.

We now define an SEDF, say $f$, such that $\sum_{e\in
E(K_{m,n})}f(e)=2m$. Let $X_1=\{u_1, \ldots
,u_{\frac{m}{2}}\}$,$X_2=X-X_1,Y_1=\{v_1,\ldots
,v_{\frac{n+1}{2}}\}$ and $Y_2=Y-Y_1$.
%We partition $X$ into two subsets $X_1$ and
%$X_2$ such that $|X_1|=|X_2|=\frac{m}{2}$. Assume that $X_1=\{u_1,
%\ldots ,u_{\frac{m}{2}}\}$. Also we partition $Y$ into two subsets
%$Y_1$ and $Y_2$ such that $|Y_1|=\frac{n+1}{2}$ and
%$|Y_2|=\frac{n-1}{2}$. Suppose that $Y_1=\{v_1,\ldots
%,v_{\frac{n+1}{2}}\}$.

Now, define $f$ as follows:

\ \ $f(e) =
\begin{cases}
    1        &\mbox{if $e$ meets}\,\, X_1\,\, {\rm and}\,\, Y_2 \\
    1        &\mbox{if $e$ meets}\, \,X_2\, \,{\rm and }\, Y_1 \\
    1         &\mbox{if}\, e=u_iv_i,\,\, 1\leq i\leq m/2\\
    1        &\mbox{if}\,e=u_iv_j,\, 1\leq i\leq m/2\,\,{\rm and}\,\, j=(i+m/2) \,\,{\rm modulo}\,\,(n+1)/2 \\
    -1       &\mbox{otherwise}.
\end{cases}$

For each $u\in X_1$, we have $s_u=3$. For every $u\in X_2$, we
have $s_u=1$. Also for each $v\in Y_1$, we have $s_v\geq 2$. For
each $v\in Y_2$, $s_v=0$. By Lemma 1, it is not hard to see that
$f$ is an SEDF. Also we have, $$\sum_{e\in
E(K_{m,n})}f(e)=\sum_{u\in X}s_u=\frac{3m}{2}+\frac{m}{2}=2m.$$

\noindent {\bf Case 2}. $2m<n+1\leq 3m$. We claim that
$\gamma'_s(K_{m,n})=n+1$. First we show that
$\gamma'_s(K_{m,n})\geq n+1$. By contradiction assume that there
exists an SEDF, $f$, such that $\sum_{e\in E(K_{m,n})}f(e)<n+1$.
Since $n+1\leq 3m$, we have $\sum_{e\in E(K_{m,n})}f(e)<3m$.
Therefore there exists a vertex $u\in X$ such that $s_u<3$. Since
$s_u$ is odd, $s_u\leq 1$. If $s_u=1$, then $u$ is incident with
$\frac{n+1}{2}$ edges with value $1$ and $\frac{n-1}{2}$ edges
with value $-1$. If $f(uv)=1$, for some $v\in Y$, then by
Lemma~\ref{edge}, $s_v\geq 1$ and since $s_v$ is even, we have
$s_v\geq 2$. If $f(uv)=-1$, for some $v\in Y$, then one can see
that $s_v\geq 0$. Hence,
$$\sum_{e\in E(K_{m,n})}f(e)=\sum_{v\in Y}s_v\geq 2\left(\frac{n+1}{2}\right)=n+1,$$

which is a contradiction.

If $s_u<1$, then $s_u\leq -1$. By Lemma~\ref{edge}, $s_v\geq 1$, for
each $v\in Y$. Thus, $\sum_{e\in E(K_{m,n})}f(e)=\sum_{v\in
Y}s_v\geq n$. Since the number of edges is even, $\sum_{e\in
E(K_{m,n})}f(e)$ is also even. Now, since $n$ is odd, $\sum_{e\in
E(K_{m,n})}f(e)\geq n+1,$ a contradiction. Hence
$\gamma'_s(K_{m,n})\geq n+1$.

We now define an SEDF, say $f$, such that  $\sum_{e\in
E(K_{m,n})}f(e)=n+1$. Let $X_1=\{u_1, \ldots
,u_{\frac{m}{2}}\}$,$X_2=X-X_1$,$Y_1=\{v_1,\ldots
,v_{\frac{n+1}{2}}\}$ and $Y_2=Y-Y_1$.
%Let $X_1$ and $X_2$ be a partition of $X$,
%each of them with $\frac{m}{2}$ vertices. Also assume that $Y_1$
%and $Y_2$ is a partition of $Y$  such that $|Y_1|=\frac{n+1}{2}$
%and $|Y_2|=\frac{n-1}{2}$. Suppose that $X_1=\{u_1\ldots ,
%u_{\frac{m}{2}}\}$ and $Y_1=\{v_1,\ldots , v_{\frac{n+1}{2}}\}$.
Let us define,

\ \ $f(e)=
\begin{cases}
 1  &\ {\rm if} \,\, e \,\,{\rm meets} \,\, X_1 \,\,{\rm and}\,\, Y_2 \\
 1  &\ {\rm if }\,\, e \,\, {\rm meets} \,\, X_2 \,\, {\rm and } \,\, Y_1\\
 1  &\ {\rm if} \,\, e=u_iv_j \,\,{\rm and} \,\, i=j \,\, {\rm modulo} \,\, \frac{m}{2}, \,\, 1\leq i\leq \frac{m}{2}, 1\leq j\leq \frac{n+1}{2}\\
 -1 &\ {\rm otherwise}.
\end{cases}$

It is straightforward to see that for each vertex $u\in X_1$,
$s_u\geq 3$ and for each vertex $u\in X_2$, $s_u=1$. Also, for
each $v\in Y_1, s_v=2$ and for each $v\in Y_2,\, s_v=0$. Thus we
have,
$$\sum_{e\in E(K_{m,n})}f(e)=\sum_{v\in Y}s_v=\frac{2(n+1)}{2}=n+1.$$

 By Lemma \ref{edge}, it can be easily seen that $f$ is an
 SEDF.

\noindent {\bf Case 3}.  $3m<n+1$. We claim that
$\gamma'_s(K_{m,n})=3m$. First we prove that
$\gamma'_s(K_{m,n})\geq 3m$. By contradiction assume that there
exists an SEDF $f$ such that $\gamma'_s(K_{m,n})< 3m$. Hence
there exists a vertex $u\in X$ such that $s_u<3$.  By a similar
method as we saw in the proof of Case $2$, we conclude that
$\sum_{e\in E(K_{m,n})}f(e)\geq n+1$, which contradicts the
inequality $3m< n+1$. Hence $\gamma'_s(K_{m,n})\geq 3m$.

We now define an SEDF, say $f$, such that $\sum_{e\in
E(K_{m,n})}f(e)=3m.$ Consider a partition of $X$ such as $X_1$ and
$X_2$, each of them containing $m/2$ vertices. Also suppose that
$Y_1$, $Y_2$ and $Y_3$ is a partition of $Y$ such that
$|Y_1|=|Y_2|=\frac{n-3}{2}$ and $|Y_3|=3$. We define $f$ as follows:

\ \ $f(e)=
\begin{cases}
-1 &\ {\rm if} \,\, e \,\, {\rm meets} \,\, X_1 \,\, {\rm and} \,\, Y_1 \\
-1 &\ {\rm if} \,\, e \,\,  {\rm meets} \,\, X_2\,\,{\rm and}\,\, Y_2\\
1  &\ {\rm otherwise}.
\end{cases}$

 Now, it can be easily seen that for any $u\in X$, $s_u=3$ and
for any $v\in Y, s_v\geq 0$. By Lemma~\ref{edge}, $f$ is an SEDF.
Also we have,
$$\sum_{e\in E(K_{m,n})}f(e)=\sum_{u\in X}s_u=3m.$$

{\bf (iv)} Three cases may be considered:

\noindent {\bf  Case 1.}  $n\leq 2m$. We claim that $
\gamma'_s(K_{m,n})=2m$. First we show that $\gamma'_s(K_{m,n})\geq
2m$. By contradiction suppose that $f$ is an SEDF  such that
$\sum_{e\in E(K_{m,n})}f(e)< 2m$. Thus, there exists a vertex
$u\in X$ such that $s_u<2$. Since $s_u$ is even,  $s_u\leq 0$. If
$s_u=0$, then $\frac{n}{2}$ edges incident with $u$ have value
$1$ and other $\frac{n}{2}$ edges have value $-1$. If $f(uv)=1$,
for some $v\in Y$, then by Lemma 1, $s_v\geq 2$ and since $s_v$
is odd, we have $s_v\geq 3$. If $f(uv)=-1$, then we have $s_v
\geq 1$. Therefore, $$\sum_{e\in E(K_{m,n})}f(e)=\sum_{v\in
Y}s_v\geq 3n/2+n/2=2n> 2m,$$ a contradiction.

Now, assume that $s_u<0$. Thus $s_u\leq -2$. By Lemma 1, $s_v\geq
2$, for any $v\in Y$. Therefore, $$\sum_{e\in
E(K_{m,n})}f(e)=\sum_{v\in Y}s_v\geq 2n> 2m,$$ a contradiction.
Hence $\gamma'_s(K_{m,n})\geq 2m$.

We now define an SEDF, say $f$, such that $\sum_{e\in
E(K_{m,n})}f(e)=2m.$ We know that all edges of $K_{m,n}$ can be
decomposed into $K_{m,m}$ and $K_{n-m,m}$. Note that $m$ and $n-m$
are odd and $n-m\leq m$. By Part (ii) there exists an SEDF, $g_1$,
for $K_{m,m}$ such that $\sum_{e\in E(K_{m,m})} g_1(e)=m$ and for
each vertex $x$, $s_x=1$. Also there exists an SEDF, say $g_2$, for
$K_{n-m,m}$ such that $\sum_{e\in E(K_{n-m,m})}g_2(e)=m$ and for
every vertex $u\in X$, $s_u=1$ and for other vertex $v$, $s_v\geq
1$. Now, define an SEDF, say $f$, for $K_{m,n}$ such that for each
$e\in E(K_{m,m})$, $f(e)=g_1(e)$ and for every $e\in E(K_{n-m,m})$,
$f(e)=g_2(e)$. Now, for every $u\in X$, we have $s_u=2$ and for each
$v\in Y$, we have $s_v\geq 1$. By Lemma \ref{edge}, $f$ is an SEDF
and moreover we find,
$$\sum_{e\in E(K_{m,n})}f(e)=\sum_{e\in
E(K_{m,m})}g_1(e)+\sum_{e\in E(K_{n-m,m})}g_2(e)=m+m=2m.$$

\noindent {\bf Case 2.} $2m< n\leq 3m-1$. We claim that
$\gamma'_s(K_{m,n})=n$. First we show that $\gamma'_s(K_{m,n})\geq
n$. By contradiction assume that $f$ is an SEDF and $\sum_{e\in
E(K_{m,n})}f(e)< n$. This implies that there exists a vertex
$v\in Y$ such that $s_v<1$. Since $s_v$ is odd, we have $s_v\leq
-1$. If $s_v=-1$, then $v$ is incident with $\frac{m-1}{2}$ edges
with value $1$ and  $\frac{m+1}{2}$ edges with value $-1$. If
$f(uv)=1$, for some $u\in X$, then by Lemma~\ref{edge} , $s_u\geq
3$. Now, since $s_u$ is even, $s_u\geq 4$. If $f(uv)=-1$, then we
conclude that $s_u\geq 2$. Thus,
$$\sum_{e\in E(K_{m,n})}f(e)=\sum_{u\in X}s_u\geq
\frac{4(m-1)}{2}+\frac{2(m+1)}{2}=3m-1\geq n,$$ a contradiction.

If $s_v<-1$, then $s_v\leq -3$. By Lemma~\ref{edge}, for every
$u\in X$, $s_u\geq 3$. Hence we obtain,
$$\sum_{e\in E(K_{m,n})}f(e)=\sum_{u\in X}s_u\geq 3m>n,$$
a contradiction. Hence $\gamma'_s(K_{m,n})\geq n$.

By a similar argument as we did in the Case 1, we may find an
SEDF, say $f$, for $K_{m,n}$ such that $\sum_{e\in
E(K_{m,n})}f(e)=m+(n-m)=n$, as desired.

\noindent{\bf Case 3.} $3m-1<n$. We claim that
$\gamma'_s(K_{m,n})=3m-1$. First we show that
$\gamma'_s(K_{m,n})\geq 3m-1$. By contradiction assume that $f$ is
an SEDF such that $\sum_{e\in E(K_{m,n})}f(e)< 3m-1$. Since
$3m-1<n$, there exists a vertex $v\in Y$ such that $s_v<1$. Now,
by a similar argument as we did in Case 2, one can see that
$\sum_{e\in E(K_{m,n})}f(e)\geq 3m-1$, a contradiction.

We now define an SEDF, say $f$, such that $\sum_{e\in
E(K_{m,n})}f(e)=3m-1$.  Consider a partition of $X$ into two subsets
$X_1$ and $X_2$ such that $|X_1|=\frac{m+1}{2}$ and
$|X_2|=\frac{m-1}{2}$. Also consider a partition of $Y$ such as
$Y_1, Y_2$ and $Y_3$ such that $|Y_1|=\frac{3m+3}{2}$,
$|Y_2|=\frac{n}{2}-2$, $|Y_3|=\frac{n-(3m-1)}{2}$. Let
$X_1=\{u_1,\ldots ,u_{\frac{m+1}{2}}\}, Y_1=\{v_1,\ldots ,
v_{\frac{3m+3}{2}}\}$. Define $f$ as follows:

\ \ $f(e) =
\begin{cases}
    1        &\mbox{if e meets}\, \,X_1 \, \,{\rm and}\,\, Y_2 \\
    1        &\mbox{if e meets}\, \,X_2 \,\,{\rm and}\,\, Y_1 \\
    1        &\mbox{if e meets}\, \,X_2\, \,{\rm and}\,\, Y_3 \\
    1        &\ e=u_iv_j, \, \, 1\leq i\leq \frac{m+1}{2}\,\,\, {\rm and}\,\,  j \in \{3i-2, 3i-1, 3i\} \\
    -1       &\mbox{otherwise}.
\end{cases}$

One can easily see that for any $u\in X_1$, $s_u=2$, and for any
$u\in X_2$, $s_u=4$. Also we have,

\ \  $s_v=
\begin{cases}
1   &\  v \in Y_1 \cup Y_2\\
-1  &\  v \in Y_3.
\end{cases}$

 Now, Lemma \ref{edge} implies that $f$ is an SEDF.

Also, we have, $$\sum_{e\in E(K_{m,n})}f(e)=\sum_{u\in
X}s_u=\frac{2(m+1)}{2}+\frac{4(m-1)}{2}=3m-1.$$ }
\end{proof}

\noindent {\bf Acknowledgment.} The research of the first author was
supported by a grant from IPM (No. 86050212).

{}
\noindent {\small {\sc Saieed Akbari  \quad {\tt
s\_akbari@sharif.edu}}\\  \noindent Institute for Studies
in Theoretical Physics and Mathematics,\\ P. O. Box 19395-5746, Tehran, Iran\\ \noindent Department of Mathematical Sciences\\
Sharif
University of Technology\\P. O. Box 11365-9415, Tehran, Iran.\\
\noindent {\small {\sc Pooya Hatami \quad
{\tt p\_hatami@ce.sharif.edu}}\\
\noindent {\small {\sc Sadegh Bolouki \quad       {\tt
saadegh@ee.sharif.edu}}\\ \noindent {\small {\sc
Milad Siami \quad       {\tt miladsiami@ee.sharif.edu }}\\
\noindent Department of Electrical Engineering \\ Sharif
University of Technology\\ Tehran, Iran.}


\begin{thebibliography}{}

\bibitem{bon} J.A. Bondy and U.S.R. Murty, Graph Theory with Applications,
North-Holland, 1976.




\bibitem{xu1}  B. Xu,
Two classes of edge domination in graphs, Disc. Appl. Math. 154
(2006), No. 10, 1541-1546.



\bibitem{xu2} B. Xu,
On edge domination numbers of graphs, Disc. Math. 294 (2005), No.
3, 311-316.

\bibitem{xu3} B. Xu,
On signed edge domination numbers of graphs, Disc. Math. 239
(2001) 179-189.


\bibitem{ze} B. Zelinka,
On signed edge domination numbers of trees, Math. Bohem. 127
(2002), no. 1, 49-55.

\bibitem{zh} Z. Zhang, B. Xu, Y. Li,  L. Liu,
A note on the lower bounds of signed edge domination number of a
graph, Discrete Math. 195 (1999), No. 1-3, 295-298

\end{thebibliography}
\end{document}